\pgfplotsset{compat=1.13}
\newtheorem {remark}{Remark}
\newtheorem {corollary}{Corollary}
\newtheorem {proposition}{Proposition}
\newtheorem {property}{Property}
\newtheorem {definition}{Definition}
\def\R{\mathbb{R}}
\def\B{\mathbb{B}}
\def\N{\mathbb{N}}
\def\Z{\mathbb{Z}}
\def\S{\mathbb{S}}
\def\E#1{{\rm{E}}\{#1\}}
\def\Pr{\mathrm{Pr}}
\def\tr{\mathrm{tr}}
\def\Gij{\Gamma_{i,j}}
\def\Gik{\Gamma_{i,k}}
\def\Gma{\tilde{\Gamma}}
\def\bx{\bar{x}}
\def\cR{\mathcal{R}}
\newcommand {\T}{^{\top}} 
\begin{document}

\begin{frontmatter}
\title{Probabilistic reachable and invariant sets \\ 
for linear systems with correlated disturbance} 

\thanks[footnoteinfo]{This paper was not presented at any IFAC 
meeting. Corresponding author.}

\author[GIPSA]{Mirko Fiacchini}\ead{mirko.fiacchini@gipsa-lab.fr},
\author[Sevilla]{Teodoro Alamo}\ead{teodoroalamo@gmail.com},

\address[GIPSA]{Univ. Grenoble Alpes, CNRS, Grenoble INP, GIPSA-lab, 38000 
Grenoble, France}
\address[Sevilla]{Departamento de Ingenier\'ia de Sistemas y Autom\'atica, 
Universidad de Sevilla, Sevilla 41092, Spain}

\begin{keyword}                           
Probabilistic sets, Correlated disturbance, Stochastic systems, Predictive 
control
\end{keyword}                             

\begin{abstract} In this paper a constructive method to determine and compute 
probabilistic reachable and invariant sets for linear discrete-time systems, 
excited by a stochastic disturbance, is presented. The samples of the 
disturbance signal are not assumed to be uncorrelated, only a bound on the 
correlation matrices is supposed to be known.  The concept of correlation 
bound is introduced and employed to determine probabilistic reachable sets and 
probabilistic invariant sets. Constructive methods for their computation, based 
on convex optimization, are given. 
\end{abstract}

\end{frontmatter}

\section{Introduction}

The recent interest in the characterization and computation of probabilistic 
reachable sets and probabilistic invariant sets is mostly due to the growing 
popularity of stochastic Model Predictive Control (SMPC), see 
\cite{mesbah2016stochastic}. Indeed, as for deterministic and robust predictive 
techniques, several desirable features can be ensured also in the stochastic 
context by appropriately employing reachable and invariant sets to ensure 
probabilistic guarantees, for instance, of constraints satisfaction, recursive 
feasibility and some stability properties. 

The stochastic tube-based approaches, for example, make a wide use of 
probabilistic invariant or reachable sets to pose deterministic constraints in 
the nominal prediction such that chance constraints are satisfied, see 
\cite{cannon2011stochastic,hewing2018stochastic}. Also in 
\cite{cannon2009probabilistic}, probabilistic invariant sets are employed to 
handle probabilistic state constraints and a method for computing probabilistic 
invariant ellipsoids is presented.

Concerning the computation of reachable and invariant sets for deterministic 
systems and for robust control, i.e. in the worst-case disturbance context, 
several well-established results are present in the literature, for linear 
\cite{blanchini2008set,kolmanovsky1998theory} and nonlinear systems 
\cite{fiacchini2010computation}. In the recent years, some results have been 
appearing also on probabilistic reachable and invariant sets. The work 
\cite{kofman2012probabilistic} is completely devoted to the problem of computing 
probabilistic invariant sets and ultimately bounds for linear systems affected 
by additive stochastic disturbances. Also the paper 
\cite{hewing2018correspondence} presents a characterization of probabilistic 
sets based on the invariance property in the robust context, whereas 
\cite{hewing2019scenario} employ scenario-based methods to design them. 

In most of the works concerning probabilistic reachable and invariant sets 
computation and SMPC, however, the stochastic disturbance is modelled by an 
independent sequence of random variables. The assumption of independence, and 
thus uncorrelation, in time between disturbances, though, is often unrealistic. 
In this paper, we consider the problem of characterizing and computing, via 
convex optimization, outer bounds of probabilistic reachable sets and 
probabilistic invariant ellipsoids for linear systems excited by disturbances 
whose realizations are correlated in time. Only bounds on covariance and 
correlation matrices are required to be known, even stationarity is not 
necessary. Based on these bounds, the called correlation bound is defined and 
then employed to determine constructive conditions for computing probabilistic 
reachable and invariance ellipsoidal sets. The method, resulting in convex 
optimization problems, is then illustrated through numerical examples.

\emph{Notation:} 
The set of integers and natural numbers are denoted with $\Z$ and $\N$, 
respectively. The spectral radius of $A \in \R^{n \times n}$ is $\rho(A)$. 
The set of symmetric matrices in $\R^{n \times n}$ is denoted $\S^n$. With 
$\Gamma\succ 0$  ($S\succeq 0$) it is denoted that $\Gamma$ is a definite 
(semi-definite) positive matrix. If $\Gamma \succeq 0$ then 
$\Gamma^{\frac{1}{2}}$ is the matrix satisfying $\Gamma^{\frac{1}{2}} 
\Gamma^{\frac{1}{2}} = \Gamma$. For all $\Gamma \succeq 0$ and $r \geq 0$ 
define $\B(\Gamma, r) = \{x = \Gamma^{1/2} z \in \R^n: \, z\T z \leq r\}$; 
if moreover $\Gamma \succ 0$, then $\B(\Gamma, r) = \{x \in \R^n: \, x\T 
\Gamma^{-1} x \leq r\}$. Given two sets $Y, Z \subseteq \R^n$, their 
Minkowski set addition is $Y + Z = \{y + z \in \R^n : \ y \in Y, \, z \in Z\}$, 
their difference is $Y - Z = \{x \in \R^n : \ x + Z \subseteq Y\}$. The Gaussian 
(or normal) distribution with mean $\mu$ and covariance $\Sigma$ is denoted 
$\mathcal{N}(\mu, \Sigma)$, the $\chi$ squared cumulative distribution function 
of order $n$ is denoted $\chi_n^2(x)$.

\section{Correlation bound}

Consider the discrete-time system 
\begin{equation}\label{eq:system}
x_{k+1} = Ax_k + w_k,
\end{equation}
where $x_k \in \R^n$ is the state and $w_k \in \R^n$ an additive disturbance 
given by a sequence of random variables that are supposed to be correlated in 
time. 

\begin{remark}
In this paper, no assumption on $\{w_k\}_{k \in \N}$ is posed other than the 
existence of a bound on the covariance and correlation matrices. Neither 
stationarity is required. This aspect might be crucial in practice, as no exact 
knowledge of the matrices nor guarantee of stationarity are often available.
\end{remark}

The following definition of correlation bound encloses the key concept that 
permits to characterize and compute probabilistic reachable and invariant sets 
for linear systems affected by correlated disturbance. 

\begin{definition}[Correlation bound]\label{def:correlation:bound}
The random sequence $\{w_k\}_{k \in \Z}$ is said to have a correlation bound 
$\Gamma_w$ for matrix $A$ if the recursion $z_{k+1} = A z_k + w_k$ with $z_0 = 
0$, satisfies 
\begin{equation}\label{eq:corr_bound}
A\E{z_{k} w_k\T } + \E{w_k z_k\T}A\T + \E{w_k w_k\T } \preceq \Gamma_w, 
\end{equation}
or, equivalently 
\begin{equation*}\label{eq:corr_bound2}
\E{z_{k+1} z_{k+1}\T} \preceq A \E{z_k z_k\T}A\T + \Gamma_w, 
\end{equation*}
for all $k \geq 0$.
\end{definition}

\subsection{Computation of a correlation bound}

As it will be shown in the subsequent sections, a correlation bound permits
to determine sequences of probabilistic reachable sets and probabilistic 
invariant sets. For this, it is necessary to provide a condition and a method to 
obtain a correlation bound. Such a condition is presented in the following 
proposition.

\begin{proposition}\label{prop:LMIbound}
Given the system (\ref{eq:system}) with  $\rho(A) < 1$, let $\{w_k\}_{k \in \Z} 
\in \R^n$ be a random sequence such that 
\begin{equation}\label{eq:bound}
 \Gij \Gma^{-1}  \Gij^\top \preceq (\alpha + \beta \gamma^{j - i}) \Gma, 
\qquad \forall i \leq j,
\end{equation}
where 
\begin{align}
& \E{w_k w_k^\top} = \Gamma_{k,k} \preceq \Gma, \qquad  \forall k \in \N, 
\label{eq:Gamma}\\
& \Gij = \E{w_i w_j^\top}, \qquad \forall i,j \in \N,
\end{align}
with $\Gma \succ 0$ and $\alpha, \beta, \gamma \in \R$ such that $\alpha 
\geq 0$, $\beta \geq 0$ and $\gamma \in (0, \, 1)$. Given $\eta 
\in [\rho(A)^2, \, 1)$ and $\varphi \geq 1 $ such that there is $S \in \S^n$ 
satisfying
\begin{align}\label{eq:ASphi}
    & S \preceq \Gma \preceq \varphi S, \qquad A S A^\top \preceq \eta S
\end{align}
and $p \in (\eta, \, 1)$, then $\Gamma_w \in \S^n$ with $\Gamma_w 
\succ 0$ is a correlation bound for the sequence $\{w_k\}_{k \in \Z}$ and  
matrix $A$ if it satisfies
\begin{equation}\label{eq:LMIcond}
\Big(\alpha \varphi \frac{\eta}{p - \eta} + \beta \varphi \frac{\gamma \eta}{p - 
\gamma \eta}   + \frac{p}{1- p} + 1 
\Big) \Gma \preceq \Gamma_w.
\end{equation}

\end{proposition}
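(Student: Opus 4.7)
The plan is to unroll the recursion, apply a lag-weighted Young-type inequality to every cross term, and then recognize the three resulting geometric series as exactly the coefficients in (\ref{eq:LMIcond}). Since $z_0=0$, iterating $z_{k+1}=Az_k+w_k$ gives $z_k=\sum_{i=0}^{k-1}A^{k-1-i}w_i$, so after the substitution $m=k-i$,
\begin{equation*}
A\E{z_k w_k\T}+\E{w_k z_k\T}A\T+\Gamma_{k,k}
=\Gamma_{k,k}+\sum_{m=1}^{k}\big[A^m\Gamma_{k-m,k}+\Gamma_{k,k-m}(A^m)\T\big].
\end{equation*}

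To each bracketed pair I would apply $(X-Y)(X-Y)\T\succeq 0$ with $X=t_m^{-1/2}A^m\Gma^{1/2}$ and $Y=t_m^{1/2}\Gamma_{k,k-m}\Gma^{-1/2}$, where $t_m>0$ is a lag-dependent weight to be chosen. This decouples the pair as
\begin{equation*}
A^m\Gamma_{k-m,k}+\Gamma_{k,k-m}(A^m)\T\preceq t_m^{-1}A^m\Gma(A^m)\T+t_m\,\Gamma_{k-m,k}\T\Gma^{-1}\Gamma_{k-m,k}.
\end{equation*}
The second piece is bounded by $t_m(\alpha+\beta\gamma^m)\Gma$ via the Schur-complement equivalent form of (\ref{eq:bound}). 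For the first, induction on $ASA\T\preceq\eta S$ from (\ref{eq:ASphi}) gives $A^m S(A^m)\T\preceq\eta^m S$, which combined with $S\preceq\Gma\preceq\varphi S$ yields $A^m\Gma(A^m)\T\preceq\varphi\eta^m\Gma$. Using also $\Gamma_{k,k}\preceq\Gma$, the whole left-hand side of (\ref{eq:corr_bound}) is then majorized by $\big[1+\sum_{m=1}^{k}\bigl(t_m^{-1}\varphi\eta^m+t_m(\alpha+\beta\gamma^m)\bigr)\big]\Gma$.

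The decisive ingredient is the choice $t_m=\varphi(\eta/p)^m$, which turns $t_m^{-1}\varphi\eta^m$ into $p^m$ and $t_m(\alpha+\beta\gamma^m)$ into $\varphi\alpha(\eta/p)^m+\varphi\beta(\gamma\eta/p)^m$. Since $\eta<p<1$ and $\gamma\in(0,1)$, each of the three geometric series converges when $m$ is extended to infinity, summing respectively to $p/(1-p)$, $\alpha\varphi\eta/(p-\eta)$, and $\beta\varphi\gamma\eta/(p-\gamma\eta)$ -- precisely the coefficient of $\Gma$ on the left of (\ref{eq:LMIcond}), so (\ref{eq:corr_bound}) follows. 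The only real obstacle is locating this $t_m$: one must couple the $A^m$-decay rate $\eta$ with the threshold $p$ through the ratio $\eta/p$ so that the margin $p-\eta>0$ appears as a denominator and all three ratios ($\eta/p$, $\gamma\eta/p$, $p$) become summable simultaneously; after that, the rest of the argument is routine.
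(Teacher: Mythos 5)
Your proposal is correct and follows essentially the same route as the paper's proof: unroll $z_k$, decouple each lag-$m$ cross term by completing a square with a geometric weight, bound $A^m \Gma (A^m)\T \preceq \varphi \eta^m \Gma$ via (\ref{eq:ASphi}), and sum the three geometric series, and your choice $t_m = \varphi(\eta/p)^m$ reproduces the paper's majorant $\big(\alpha\varphi(\eta/p)^m + \beta\varphi(\gamma\eta/p)^m + p^m\big)\Gma$ term by term. The only cosmetic difference is that you apply (\ref{eq:bound}) in the transposed form $\Gij\T \Gma^{-1} \Gij \preceq (\alpha+\beta\gamma^{j-i})\Gma$ rather than sandwiched between $A^{j-i}$ and its transpose, and your Schur-complement remark correctly justifies that equivalence.
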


\vspace{0.15cm}

\begin{proof}
From the definition of correlation bound and the equality $z_k = 
\sum_{i=0}^{k-1} A^{k-1-i}w_i$, matrix $\Gamma_w$ must satisfy 
\begin{align}
A \E{(\sum_{i = 0}^{k-1}A^{k-1-i} w_i)w_k^\top } + \E {w_k(\sum_{i = 
0}^{k-1}A^{k-1-i} w_i)^\top } A^\top \nonumber \\ 
+ \E{w_k w_k^\top} \preceq \Gamma_w \nonumber 
\end{align}
for all $k \in \N$. From condition (\ref{eq:bound}) and 
\begin{align}
& 0 \preceq \left(\frac{A^{j-i} \Gij \Gma^{ -\frac{1}{2}}}{p^{\frac{j-i}{2}}} 
 - p^{\frac{j-i}{2}} \Gma^{\frac{1}{2}} \right) \hspace{-0.1cm} 
\left(\frac{A^{j-i} \Gij \Gma^{-\frac{1}{2}}}{p^{\frac{j-i}{2}}}  - 
p^{\frac{j-i}{2}} \Gma^{\frac{1}{2}} \right)^\top \nonumber\\
& \hspace{-0.15cm} = \hspace{-0.05cm} p\hspace{-0.1cm}^{-(j-i)} 
\hspace{-0.05cm} 
A^{j-i} \Gij \Gma^{-1} \hspace{-0.05cm} \Gij^\top (A^{j-i} 
)\hspace{-0.1cm}^\top \hspace{-0.15cm} + 
\hspace{-0.05cm} p^{j-i}\Gma 
\hspace{-0.1cm} - \hspace{-0.05cm} A^{j-i} \Gij \hspace{-0.05cm} - 
\hspace{-0.05cm} \Gij^\top (A^{j-i})\hspace{-0.1cm}^\top \nonumber
\end{align}
for every $i,j \in \N$ with $i \leq j$ and $p \neq 0$, it follows that 
\begin{align}
& A^{j-i} \Gij + \Gij^\top (A^{j-i})^\top \nonumber \\
& \preceq (\alpha p^{-(j-i)} + \beta (\gamma p^{-1})^{j-i}) A^{j-i} \Gma 
(A^{j-i})^\top + p^{j-i} \Gma.  \nonumber
\end{align}
Therefore, for every $k \in \N$ it holds 
\begin{align}\label{eq:LMI2}
& \hspace{-0.1cm} A \E{(\sum_{i = 0}^{k-1}A^{k-1-i} w_i)w_k^\top } +  \E 
{w_k(\sum_{i = 0}^{k-1}A^{k-1-i} w_i)^\top } A^\top \nonumber\\
& + \E{w_k w_k^\top} \hspace{-0.1cm} \preceq \hspace{-0.1cm} \sum_{i = 
0}^{k-1}A^{k-i} \E{w_i w_k^\top} + \hspace{-0.1cm} \sum_{i = 0}^{k-1} \E{w_k 
w_i^\top}  (A^{k-i})^\top \hspace{-0.1cm}  + \Gma  \nonumber\\
& = \Big( \sum_{i = 0}^{k-1}A^{k-i} \Gik +  \Gik^\top  (A^{k-i})^\top \Big) + 
\Gma \nonumber\\
& \preceq \Big( \sum_{i = 0}^{k-1} (\alpha p^{-(k-i)} \hspace{-0.1cm} + 
\hspace{-0.05cm} \beta (\gamma p^{-1})^{k-i}) A^{k-i} \Gma (A^{k-i})^\top 
\hspace{-0.1cm} + \hspace{-0.05cm} p^{k-i} \Gma  \Big) \hspace{-0.1cm}  + 
\hspace{-0.05cm} \Gma. \nonumber
\end{align}
From (\ref{eq:ASphi}), it follows that
\begin{equation}\label{eq:SGamma}
A^j \Gma (A^j)^\top \preceq \varphi A^j S (A^j)^\top \preceq \varphi \eta^{j} S 
\preceq \eta^{j} \Gma
\end{equation}
for all $j \in \N$, and then 
\begin{align}\label{eq:LMI3}
\hspace{-0.1cm} &  A \E{(\sum_{i = 0}^{k-1} \hspace{-0.1cm} A^{k-1-i} 
w_i)w_k^\top \hspace{-0.05cm}} \hspace{-0.05cm} + \hspace{-0.05cm} 
\E{w_k(\sum_{i = 0}^{k-1} \hspace{-0.1cm} A^{k-1-i} w_i)\hspace{-0.05cm}^\top 
\hspace{-0.05cm}} A\hspace{-0.05cm}^\top \hspace{-0.15cm} + \hspace{-0.05cm} 
\E{w_k w_k^\top\hspace{-0.05cm}}  
\nonumber \\
\hspace{-0.1cm} & \preceq \sum_{i = 0}^{k-1} \alpha \varphi(\eta p^{-1} 
)^{k-i}\Gma \hspace{-0.1cm}+ \hspace{-0.1cm}\sum_{i = 0}^{k-1} 
\beta \varphi(\gamma \eta p^{-1})^{k-i}\Gma \hspace{-0.1cm} + 
\hspace{-0.1cm}\sum_{i = 0}^{k-1} p^{k-i} \Gma + \Gma \nonumber\\
& = \Big( \sum_{j = 1}^{k} \alpha \varphi(\eta p^{-1})^j + \sum_{j = 
1}^{k} \beta \varphi (\gamma \eta p^{-1})^j + \sum_{j = 1}^{k} 
p^j\Big) \Gma + \Gma \nonumber\\
& = \Big( \alpha \varphi(\eta p^{-1} ) \frac{1 - (\eta p^{-1} )^k}{1 
- \eta p^{-1} }  + \beta \varphi (\gamma \eta p^{-1}) \frac{1- 
(\gamma \eta p^{-1})^k}{1- \gamma \eta p^{-1}} \nonumber\\
& \hspace{0.25cm} + p \frac{1- p^k}{1 - p}\Big) \Gma + \Gma. 
\end{align}
Two possibilities exist, $\eta$ can be either positive or zero. If $\eta > 0$ 
then $0 < \gamma \eta < \eta < p < 1$, and all the terms in the summation in 
(\ref{eq:LMI3}) are positive and monotonically increasing with $k$. If $\eta = 
0$ the first two terms in (\ref{eq:LMI3}) are null and the third one, i.e. 
$p(1-p^k)/(1-p)$, is positive and monotonically 
increasing with $k$, since $0 = \eta < p < 1$. In both cases the supremum is 
finite and attained for $k \rightarrow +\infty$ and then condition 
(\ref{eq:LMIcond}) implies that $\Gamma_w$ is a correlation bound for $A$.
\end{proof}

Note that condition (\ref{eq:bound}) is the reasonable assumption of a 
correlation that exponentially vanishes with time. For one dimensional systems 
and $\alpha = 0$, for instance, it means that the correlation function of $w_i$ 
and $w_j$ is exponentially vanishing as $|j - i|$ grows. 
\begin{remark}
Notice moreover that only an upper bound on the covariance $\Gma$, ensuring the 
satisfaction of (\ref{eq:bound}), is necessary to be known. This is also 
reasonable, since the exact values of $\Gij$ for all $i,j \in \N$ are often not 
available, in practice.
\end{remark}

The result of Proposition~\ref{prop:LMIbound} is used hereafter to design 
an optimization based procedure to compute the tightest correlation bound.
To obtain the sharper bound through (\ref{eq:LMIcond}), the parameter 
multiplying $\Gma$, has to be minimized. Note first that such parameter is 
monotonically increasing with $\varphi$ and $\eta$, for $\varphi \geq 1$ 
and $\eta \in [\rho(A)^2,1)$. Nevertheless, the minimizing pair $\varphi$ and 
$\eta$ is not evident, even for a given $p$, due to the constraint 
(\ref{eq:ASphi}). One possibility is to grid the interval $[\rho(A)^2, 1)$ of 
$\eta$ and then obtain, for every value of $\eta$ on the grid, the optimal 
$\varphi$ and $p$. To do so, one should first fix $\eta$ and then solve the 
semidefinite programming problem 
\begin{align}
(\varphi^*, \, S^*)  = & \min_{\varphi, S} \ \varphi \nonumber \\
& \mathrm{s.t. } \quad \Gma \preceq S \preceq \varphi \Gma \nonumber \\
& \hspace{0.7cm} A S A\T \preceq \eta S. \nonumber
\end{align}
 
Note now that, once $\eta$ and $\varphi$ are given, condition 
(\ref{eq:LMIcond}) is a convex constraint in $p$ and then in $\Gamma_w$. In 
fact, $a/(p - a)$ is zero if $a = 0$ and it is finite, convex and 
decreasing for $p \in (a, +\infty)$ if $a > 0$, whereas $p/(1-p)$ is finite, 
convex and increasing for $p \in (-\infty, 1)$. Then, the minimum of the 
function multiplying $\Gma$ exists and is unique in $(\eta,1)$. This means that, 
once $\varphi$ and $\eta$ are fixed, the value of $p$ that minimizes the 
parameter multiplying $\Gma$ at the lefthand-side of (\ref{eq:LMIcond}) can be 
computed by solving the following convex optimization problem in a scalar 
variable:
\begin{align}
p^*(\eta, \varphi)   = & \min_{p} \ \alpha \varphi \frac{\eta}{p - 
\eta} + \beta \varphi \frac{\gamma \eta}{p - \gamma \eta}   + \frac{p}{1- 
p} \nonumber \\
& \mathrm{s.t. } \quad \eta < p < 1. \nonumber
\end{align}
Finally, $\Gamma_w$ can be computed by using in (\ref{eq:LMIcond}) the minimal 
value of the parameter multiplying $\Gma$ over the optimal ones obtained for 
the different $\eta$ on the grid and then minimizing a measure of $\Gamma_w$, 
or, even, get $\Gamma_w$ by imposing the equality to hold in (\ref{eq:LMIcond}).

\begin{remark}
Note that $\gamma$ could also be bigger than or equal to~1: this would lead 
to an (although non realistic) increasingly correlated disturbance. The limit 
would exist provided that $\eta$ is smaller than the inverse of $\gamma$, for 
all $p \in (\gamma \eta, 1)$. The case of $\gamma = 1$ is realistic, for 
instance for the case of constant disturbances, and can modelled by the 
constant term $\alpha$.
\end{remark}

The dependence of the bound (\ref{eq:LMIcond}) on the parameter $\varphi$ can 
be removed by avoiding using the bound $S \preceq \varphi \Gma$ as in 
(\ref{eq:SGamma}). The corollary below, providing a potentially less 
conservative correlation bound, follows straightforwardly. 

\begin{corollary}\label{cor:LMIbound}
Under the hypothesis of Proposition~\ref{prop:LMIbound}, given $p \in 
(\eta,1)$, $\Gamma_w$ is a correlation bound for matrix $A$ if it satisfies
\begin{equation}\label{eq:LMIcond_S}
\Big(\frac{\alpha \eta}{p - \eta} + \frac{\beta \gamma 
\eta}{p - \gamma \eta}\Big) S + \Big(\frac{p}{1- p} + 1 
\Big) \Gma \preceq \Gamma_w.
\end{equation}
\end{corollary}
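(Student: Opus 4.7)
The plan is to follow the proof of Proposition~\ref{prop:LMIbound} almost verbatim, diverging from it only at the single step in~\eqref{eq:SGamma} that is responsible for introducing the factor $\varphi$ into~\eqref{eq:LMIcond}. First, I would inherit without change the derivation that produces
\begin{align*}
& A \E{z_k w_k^\top} + \E{w_k z_k^\top} A^\top + \E{w_k w_k^\top} \\
& \; \preceq \sum_{j=1}^{k} \bigl(\alpha p^{-j} + \beta (\gamma p^{-1})^j\bigr) A^j \Gma (A^j)^\top + \sum_{j=1}^{k} p^j \Gma + \Gma,
\end{align*}
since this part of the argument only uses the completion-of-squares PSD inequality together with~\eqref{eq:bound}, and neither ingredient involves $\varphi$.

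Next, rather than invoking the full chain~\eqref{eq:SGamma}, which combines $\Gma \preceq \varphi S$, the iterated $A S A^\top \preceq \eta S$, and $S \preceq \Gma$ so as to land back on a $\Gma$-multiple (picking up the factor $\varphi$ along the way), I would truncate the chain and keep the bound in terms of $S$, i.e.\ $A^j \Gma (A^j)^\top \preceq \eta^j S$. This attaches the $\alpha$- and $\beta$-contributions to $S$ rather than to $\varphi \Gma$, which is exactly the improvement claimed by the statement. The $p^{k-i}\Gma$ terms coming out of the PSD inequality and the $\E{w_k w_k^\top}\preceq \Gma$ term are untouched and contribute the $\bigl(p/(1-p)+1\bigr)\Gma$ summand on the left-hand side of~\eqref{eq:LMIcond_S}.

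Finally, the three geometric series $\sum_{j\geq 1}(\eta/p)^j$, $\sum_{j\geq 1}(\gamma\eta/p)^j$ and $\sum_{j\geq 1} p^j$ all converge by $\eta<p<1$ and $\gamma\in(0,1)$, summing respectively to $\eta/(p-\eta)$, $\gamma\eta/(p-\gamma\eta)$ and $p/(1-p)$; collecting $S$-pieces and $\Gma$-pieces separately and requiring $\Gamma_w$ to dominate the resulting expression yields precisely~\eqref{eq:LMIcond_S}. The one step that requires care — and the natural place one is tempted to slip a $\varphi$ back in — is the intermediate bound $A^j \Gma (A^j)^\top \preceq \eta^j S$: the naive chain via $\Gma \preceq \varphi S$ followed by $A^j S (A^j)^\top \preceq \eta^j S$ would keep a $\varphi$, so one has to argue directly in terms of an $S$ that dominates $\Gma$ (equivalently, absorb the $\varphi$ into $S$) in order to produce~\eqref{eq:LMIcond_S} cleanly. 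Beyond that bookkeeping, the argument is a mirror of the earlier proof and, as the authors note, follows straightforwardly.
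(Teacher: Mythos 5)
Your proposal is correct and is essentially the argument the paper has in mind: the paper gives no explicit proof (it says the corollary ``follows straightforwardly''), the intended route being exactly your modification of the proof of Proposition~\ref{prop:LMIbound} in which the chain (\ref{eq:SGamma}) is truncated so that the $\alpha$- and $\beta$-terms stay attached to a matrix $S$ rather than being pushed back onto $\varphi\Gma$. You also correctly flag the one real subtlety the paper glosses over: the $S$ in (\ref{eq:LMIcond_S}) must be a matrix with $\Gma \preceq S$ and $ASA^\top \preceq \eta S$ (e.g.\ $\varphi S$ for the $S$ of (\ref{eq:ASphi}), since that $S$ satisfies $S \preceq \Gma$ and would not give $A^j\Gma (A^j)^\top \preceq \eta^j S$ directly), which is consistent with the authors' remark that $S$ is an additional degree of freedom.
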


\vspace{0.15cm}

Condition (\ref{eq:LMIcond_S}) provides a further degree of freedom, i.e. the 
matrix $S$, that can be used to improve the bound.

\section{Probabilistic reachable and invariant sets}

Based on the correlation bound, conditions for computing probabilistic reachable 
and invariant sets are presented. First, two properties are given that are 
functional to the purpose. 

\begin{property}\label{prop:inclusion}
For every $r > 0$ and every $\Gma, \Sigma \in \S^n$ such that $\Gma 
\succeq 0$ and $\Sigma \succ 0$, it holds
\begin{equation}\label{eq:inclusion}
\B(A \Gma A \T + \Sigma, r) \subseteq  A \B(\Gma, r) + 
\B(\Sigma,r).
\end{equation}
\end{property}

\begin{proof}
Notice first that $A \Gma A \T + \Sigma \succ 0$ and then
\begin{align}
\B(A \Gma A \T + \Sigma, r) & = \{x \in \R^n: \ x^\top (A \Gma A \T + 
\Sigma)^{-1} x \leq r\} \nonumber \\
A \B(\Gma, r) + \B(\Sigma,r) & = \{x = A \Gma^{1/2} y + 
\Sigma^{1/2} w \in \R^n: \nonumber \\ 
& \hspace{0.5cm} y^\top y \leq r, \ w^\top w \leq r\}. 
\label{eq:LMIconditions2} 
\end{align}
For a given $x \in \B(A \Gma A \T + \Sigma, r)$, the vectors $y$ and 
$w$ defined 
\begin{equation}\label{eq:yw}
y = \Gma^{1/2} A^\top (A \Gma A^\top + \Sigma)^{-1} x, \quad w = 
\Sigma^{1/2} (A \Gma A^\top + \Sigma)^{-1} x 
\end{equation}
are such that 
\begin{equation*}
A \Gma^{1/2} y \hspace{-0.05cm} + \hspace{-0.05cm} \Sigma^{1/2} w 
\hspace{-0.05cm} = \hspace{-0.05cm} A \Gma A^\top \hspace{-0.1cm} (A \Gma 
A^\top \hspace{-0.1cm}  + \Sigma)^{-1} x \hspace{-0.05cm} + \hspace{-0.05cm} 
\Sigma (A \Gma A^\top \hspace{-0.1cm} + \Sigma)^{-1} x = x. 
\end{equation*}
Moreover, 
\begin{align}
y\T y = x\T (A \Gma A^\top + \Sigma)^{-1} A \Gma A\T (A \Gma A^\top + 
\Sigma)^{-1} x \nonumber \\
\leq x\T (A \Gma A^\top + \Sigma)^{-1} x \leq r \nonumber
\end{align}
since $A \Gma A\T \preceq A \Gma A^\top + \Sigma$ and $x \in \B(A 
\Gma A \T + \Sigma, r)$. Analogously
\begin{align}
w\T w = x\T (A \Gma A^\top + \Sigma)^{-1} \Sigma (A \Gma A^\top + 
\Sigma)^{-1} x \nonumber \\
\leq x\T (A \Gma A^\top + \Sigma)^{-1} x \leq r \nonumber
\end{align}
from $\Sigma\preceq A \Gma A^\top + \Sigma$. Hence, given $x \in \B(A 
\Gma A \T + \Sigma, r)$, two vectors $y$ and $w$ exist, as defined in 
(\ref{eq:yw}), such that $x = A \Gma^{1/2} y + \Sigma^{1/2} w$ and $y\T y 
\leq r$ and $w\T w \leq r$, which means that $x \in  A \B(\Gma, r) 
+ \B(\Sigma, r)$, from (\ref{eq:LMIconditions2}). Thus 
(\ref{eq:inclusion}) is proven. 
\end{proof}

The result in Property~\ref{prop:inclusion} is used in the following one, to 
characterize bounds on the covariance matrices and probabilities of the system 
trajectory.

\begin{property}\label{prop:sevaral_results}
Suppose that the random sequence $\{w_k\}_{k \in \N}$ has a correlation bound 
$\Gamma_w \succ 0$ for matrix $A$ with $\rho(A) < 1$. Given $r > 0$, consider 
the system $z_{k+1} = Az_k + w_k$ with $z_0 = 0$ and the recursion
\begin{align}
& \Gamma_{k+1} = A \Gamma_k A^\top + \Gamma_w \label{eq:Gammak}
\end{align}
with $\Gamma_0 = 0 \in \R^{n \times n}$. Then,
\begin{itemize}
 \item[(i)] $\E{z_k z_k^\top} \preceq \Gamma_k, \quad \forall k \geq 0$, 
\item[(ii)] $\Pr\{z_k \in \B(\Gamma_k , r)\} \geq 1 - \frac{n}{r}, \quad \forall 
k \geq 1$,
\item[(iii)] $\B(\Gamma_k , r) \subseteq \B(\Gamma_{k+1}, r) \subseteq 
A \B(\Gamma_k, r) + \B(\Gamma_w , r), \quad \forall k \geq 1$.
\end{itemize}
\end{property}

\vspace{0.2cm}

\begin{proof} The claims are proved successively.
\begin{itemize}
 \item[(i)] Suppose that $\E{z_k z_k^\top} \preceq \Gamma_k$ with $\Gamma_k$ 
recursively defined through (\ref{eq:Gammak}). Then
\begin{equation*}
\begin{array}{l}
\hspace{-0.6cm} \E{z_{k+1} z_{k+1}^\top} = \E{A z_k z_k^\top A^\top 
\hspace{-0.2cm} + A z_k w_k^\top \hspace{-0.2cm} + w_k z_k^\top A^\top 
\hspace{-0.2cm} + w_k w_k^\top }\\
\hspace{-0.4cm} = A \E{z_k z_k^\top } A^\top \hspace{-0.2cm} + A\E{z_k 
w_k^\top} 
+ \E{w_k z_k^\top }A^\top \hspace{-0.2cm} + \E{w_k w_k^\top }\\
\hspace{-0.4cm} \preceq A \E{z_k z_k^\top } A^\top \hspace{-0.2cm} + \Gamma_w 
\preceq A \Gamma_k A^\top \hspace{-0.2cm} + \Gamma_w = \Gamma_{k+1},
\end{array}
\end{equation*}
where the first inequality follows from the definition of correlation bound.

 \item[(ii)] This result is based on the Chebyshev inequality, 
\cite{stellato2017multivariate,navarro2016very}. From Markov's 
inequality, \cite{bertsekas2008introduction,billingsley2008probability}, a 
nonnegative random variable $x$ with expected value $\mu$, 
satisfies $\Pr\{x > r \} \leq \mu/r$ for all $r > 0$. From $\Gamma_w 
\succ 0$, it follows that $\Gamma_k \succ 0$ and 
$\Gamma_k^{-1} \succ 0$ for all $k \geq 1$ and then there exists $D_k \in \R^{n 
\times n}$ such that $\Gamma_k^{-1} = D_k^\top D_k$ for all $k \geq 1$. Thus
\begin{equation*}
\begin{array}{l}
\E{z_{k}^\top \Gamma_k^{-1} z_{k} } = \E{z_{k}^\top D_k^\top D_k z_{k}} = 
\E{\tr\{z_{k}^\top D_k^\top D_k z_{k}\}} \\ 
\hspace{0.2cm} = \E{\tr\{D_k z_{k}z_{k}^\top D_k^\top \}} = \tr\{D_k 
\E{z_{k}z_{k}^\top \}D_k^\top} \\
\hspace{0.2cm} \leq \tr\{D_k \Gamma_{k}D_k^\top\} = \tr\{\Gamma_{k}D_k^\top 
D_k\} = \tr\{I\} = n
\end{array}
\end{equation*}
and then, by applying the Markov's inequality, one gets $\Pr\{z_k^\top 
\Gamma_k^{-1} z_k > r\} \leq n/r$ and hence $\Pr\{z_k^\top 
\Gamma_k^{-1} z_k \leq r\} \geq 1 - n/r$, for all $k \geq 1$.

\item[(iii)] 
From the definition of $\Gamma_k$, it follows $ \Gamma_{k} = \sum_{i = 0}^{k-1} 
A^i \Gamma_w (A^i) ^\top$ for $k \geq 1$ and then
\begin{align}
\hspace{-0.1cm} \Gamma_{k+1} \hspace{-0.05cm} = \hspace{-0.05cm} A^{k} \Gamma_w 
(A^{k})^\top \hspace{-0.1cm} + \hspace{-0.1cm} \sum_{i = 0}^{k-1} A^i \Gamma_w 
(A^i) ^\top \hspace{-0.1cm} = \hspace{-0.05cm} A^{k} \Gamma_w (A^{k})^\top 
\hspace{-0.1cm} + \hspace{-0.05cm} \Gamma_k \succeq \Gamma_k. \nonumber 
\end{align}
This implies $\Gamma_{k+1}^{-1} \preceq \Gamma_{k}^{-1}$ and hence, $\B(\Gamma_{k},r) \subseteq \B(\Gamma_{k+1},r)$ for all $k \geq 1$. The inclusion  $\B(\Gamma_{k+1}, r) \subseteq A \B(\Gamma_k, r) + \B(\Gamma_w , r)$ follows by applying Property~\ref{prop:inclusion} with the definition of 
$\Gamma_{k+1}$ as in (\ref{eq:Gammak}).
\end{itemize}
\vspace*{-0.5cm}
\end{proof}

\subsection{Probabilistic reachable sets}

The simplest confidence regions are ellipsoids, that have been widely used in 
the context of MPC, see, for example, 
\cite{cannon2011stochastic,hewing2018stochastic}. The definition of 
probabilistic reachable sets is recalled.

\begin{definition}[Probabilistic reachable set]
It is said that $\Omega_k \subseteq \R^n$ with $k \in \N$ is a sequence of 
probabilistic reachable sets for system (\ref{eq:system}), with violation 
level $\epsilon \in [0,1]$, if $x_0 \in \Omega_0$ implies $\Pr\{ x_k \in 
\Omega_k\} \geq 1 - \epsilon$ for all $k \geq 1$.
\end{definition}

A condition for a sequence of sets to be a probabilistic reachable sets is 
presented, in terms of correlation bound. The analogous result for uncorrelated 
disturbance is in \cite{hewing2018correspondence}.

\begin{proposition}\label{prop:reachable}
Suppose that the random sequence $\{w_k\}_{k \in \N}$ has a correlation bound 
$\Gamma_w \succ 0$ for matrix $A$ with $\rho(A) < 1$. Given $r > 0$, consider 
the system (\ref{eq:system}) and the recursion (\ref{eq:Gammak}) with $x_0 = 0 
\in \R^n$, $\Gamma_0 = 0 \in \R^{n \times n}$. Then the sets defined as 
\begin{equation}\label{eq:cRk}
\cR_{k+1} = A \cR_k + \B(\Gamma_w,r), 
\end{equation}
for all $k \in \N$, and $\cR_0 = \{0\}$ are probabilistic reachable sets with 
violation level $n/r$ for every $r > 0$. 
\end{proposition}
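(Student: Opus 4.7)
The plan is to reduce the claim to the two central facts already proved in Property~\ref{prop:sevaral_results}: namely, the probabilistic containment of the trajectory in the ellipsoid $\B(\Gamma_k,r)$ (item (ii)) and the set-theoretic inclusion $\B(\Gamma_{k+1},r)\subseteq A\B(\Gamma_k,r)+\B(\Gamma_w,r)$ (item (iii)). The bridge between these ingredients and the statement will be an induction showing that the ellipsoids $\B(\Gamma_k,r)$ are contained in the sets $\cR_k$ defined by the Minkowski recursion (\ref{eq:cRk}).

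First I would observe that, since $x_0=0$ and system (\ref{eq:system}) shares the same recursion as the auxiliary process $z_{k+1}=Az_k+w_k$ from Property~\ref{prop:sevaral_results}, one has $x_k=z_k$ for all $k\in\N$. Thus it suffices to prove $\Pr\{z_k\in\cR_k\}\ge 1-n/r$ for all $k\ge 1$.

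Second I would establish, by induction on $k$, the inclusion $\B(\Gamma_k,r)\subseteq\cR_k$. For the base case $k=0$, both sets equal $\{0\}$ because $\Gamma_0=0$ and $\cR_0=\{0\}$. For the inductive step, assuming $\B(\Gamma_k,r)\subseteq\cR_k$, Property~\ref{prop:sevaral_results}(iii) yields
\begin{equation*}
\B(\Gamma_{k+1},r)\subseteq A\B(\Gamma_k,r)+\B(\Gamma_w,r)\subseteq A\cR_k+\B(\Gamma_w,r)=\cR_{k+1},
\end{equation*}
where monotonicity of Minkowski addition with respect to set inclusion is used together with (\ref{eq:cRk}).

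Finally I would combine this inclusion with Property~\ref{prop:sevaral_results}(ii): for every $k\ge 1$,
\begin{equation*}
\Pr\{x_k\in\cR_k\}=\Pr\{z_k\in\cR_k\}\ge \Pr\{z_k\in\B(\Gamma_k,r)\}\ge 1-\frac{n}{r},
\end{equation*}
which is exactly the required violation level $\epsilon=n/r$. I do not foresee a substantial obstacle here: the heavy lifting (the containment via the correlation bound, the Chebyshev-type probability bound, and the Minkowski inclusion for ellipsoids) has been packaged into Property~\ref{prop:inclusion} and Property~\ref{prop:sevaral_results}. The only delicate point is making sure the induction is anchored correctly at $k=0$, where $\Gamma_0$ is singular so that $\B(\Gamma_0,r)=\{0\}$ must be interpreted through the more general definition $\B(\Gamma,r)=\{\Gamma^{1/2}z:z\T z\le r\}$ given in the notation, rather than the inverse-matrix form.
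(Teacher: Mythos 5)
Your argument is correct and follows essentially the same route as the paper: an induction establishing $\B(\Gamma_k,r)\subseteq\cR_k$ via Property~\ref{prop:sevaral_results}(iii), combined with the Chebyshev-type bound of Property~\ref{prop:sevaral_results}(ii). The only difference is that you anchor the induction at $k=0$ where the paper starts at $k=1$ by computing $\B(\Gamma_1,r)=\B(\Gamma_w,r)=\cR_1$ directly; since item (iii) is stated only for $k\geq 1$, your step from $k=0$ to $k=1$ should be justified by that same explicit equality (or by invoking Property~\ref{prop:inclusion} with the semidefinite matrix $\Gamma_0=0$) rather than by citing (iii).
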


\begin{proof} 
It will be firstly proved that $\B(\Gamma_k,r) \subseteq \cR_k$, for all 
$k\geq 1$. Note first that $\Gamma_1  =  A\Gamma_0A\T +\Gamma_w= \Gamma_w$ 
and $\cR_1  = A \cR_0 + \B(\Gamma_w,r)$. Thus, $\B(\Gamma_1,r) = 
\B(\Gamma_w,r)=\cR_1$ and hence the claim is satisfied for $k=1$. It suffice 
now to prove that $\B(\Gamma_k,r)\subseteq \cR_k$ implies 
$\B(\Gamma_{k+1},r)\subseteq \cR_{k+1}$. Supposing 
$\B(\Gamma_k,r) 
\subseteq \cR_k$ implies 
\begin{equation*}
\B(\Gamma_{k+1},r) \hspace{-0.05cm} \subseteq \hspace{-0.05cm} A \B(\Gamma_k,r) 
\hspace{-0.05cm} + \B(\Gamma_w,r) \hspace{-0.05cm} \subseteq \hspace{-0.05cm} 
A\cR_k \hspace{-0.05cm} + \hspace{-0.05cm} \B(\Gamma_w,r) \hspace{-0.05cm} = 
\hspace{-0.05cm} \cR_{k+1}, 
\end{equation*}
where the first inclusion follows from (iii) of 
Property~\ref{prop:sevaral_results}. From this and the second claim of 
Property~\ref{prop:sevaral_results}, it follows 
\begin{equation*}
\Pr\{x_k\in \cR_k\} \geq \Pr\{ x_k \in \B(\Gamma_k,r)\} \geq 1 - 
\frac{n}{r}, 
\end{equation*}
which implies that $\cR_k$ with $k \in \N$ is a sequence of probability 
reachable sets with violation level $n/r$.
\end{proof}

\subsection{Probabilistic invariant sets}

The concept of probabilistic invariant sets, as defined and used in 
\cite{kofman2012probabilistic,hewing2018correspondence}, is recalled.

\begin{definition}[Probabilistic invariant set]
The set $\Omega \subseteq \R^n$ is a probabilistic invariant set for the system 
(\ref{eq:system}), with violation level $\epsilon \in [0,1]$, if $x_0 \in 
\Omega$ implies $\Pr\{ x_k\in \Omega\} \geq 1-\epsilon$ for all $k \geq 1$.
\end{definition}

A first condition for a set to be probabilistic invariant, analogous to that 
proved in \cite{hewing2018correspondence} for uncorrelated disturbances, is 
given 
below. 

\begin{property}\label{prop:Inv}
Suppose that the random sequence $\{w_k\}_{k \in \N}$ has a correlation bound 
$\Gamma_w \succ 0$ for matrix $A$. If $W \in \S^n$ and $r > 0$ are such that 
$W \succ 0$ and 
\begin{equation}\label{eq:PIS_LMI_TEO}
A \B(W, 1) + \B(\Gamma_w,r) \subseteq \B(W,1),
\end{equation}
then $\B(W,1)$ is a probabilistic invariant set with violation 
probability $n/r$.
\end{property}

\begin{proof} By definition, it is sufficient to show that $x_0 \in 
\B(W,1)$ implies $\Pr\{x_k\in \B(W,1)\} \geq 1 - n/r$, for all $k 
\geq 0$. The state $x_k$ can be written as the sum of a nominal term $\bx_k$ 
and a random vector $z_k$ that depends on the past realizations of the 
uncertainty. That is, $x_k = \bx_k+z_k$, where $\{\bx_k\}_{k\geq 0}$ and 
$\{z_k\}_{k\geq 0}$ are given by the recursions 
\begin{equation}\label{eq:xz}
\bx_{k+1} = A\bx_k, \qquad  z_{k+1} = A z_k + w_k,
\end{equation}
for all $k \geq 0$, with $\bx_0 = x_0$ and $z_0 = 0$. Below it 
is first proved that $x_0\in \B(W,1)$ and (\ref{eq:PIS_LMI_TEO}) imply 
\begin{equation}\label{inc:bx:Rk:P}
\bx_k + \cR_k \subseteq \B(W,1), \ \forall k\geq 0,
\end{equation} 
with $\cR_k$ as in (\ref{eq:cRk}). Since $\cR_0 = \{0\}$, the inclusion is 
trivially satisfied for $k=0$. Supposing that $\bx_{k}+ \cR_{k} \subseteq 
\B(W,1)$ yields 
\begin{align}
\bx_{k+1} + \cR_{k+1} & = A\bx_{k} + \left( A \cR_{k} + 
\B(\Gamma_w,r)\right) \nonumber \\ 
& = A \left(\bx_{k}+ \cR_{k}\right) + \B(\Gamma_w,r) \nonumber \\
& \subseteq A \B(W,1) + \B(\Gamma_w,r) \subseteq \B(W,1), 
\nonumber 
\end{align} 
and then (\ref{inc:bx:Rk:P}) holds. Condition (\ref{inc:bx:Rk:P}) implies 
\begin{equation*}
\Pr\{x_k \hspace{-0.05cm} \in \B(W \hspace{-0.05cm}, 1)\} \hspace{-0.05cm} 
= \hspace{-0.05cm} \Pr \{ \bx_k +z_k \hspace{-0.05cm} \in \hspace{-0.05cm} 
\B(W \hspace{-0.05cm}, 1) \} \hspace{-0.05cm} \geq \hspace{-0.05cm} 
\Pr\{z_k \hspace{-0.05cm} \in \hspace{-0.05cm} \cR_k\} \hspace{-0.05cm} \geq 
\hspace{-0.05cm} 1 \hspace{-0.05cm} - \hspace{-0.05cm} \frac{n}{r}
\end{equation*}
for all $k \geq 0$, where the last inequality follows from 
Proposition~\ref{prop:reachable}.
\end{proof}

Property \ref{prop:Inv} implies that the existence of a correlation bound 
provides a condition for probabilistic invariance that has the same structure as 
the one corresponding to robust invariance. In case of ellipsoidal invariant, 
(\ref{eq:PIS_LMI_TEO}) results in a bilinear condition, see 
\cite{boyd1994linear}, that can be solved, for instance, by gridding the space 
of the Lagrange multiplier and solving an LMI for every value. 

An additional novel condition for a set to be probabilistic invariant, 
employing the correlation bound of the correlated random sequence $w_k$, 
follows. 

\begin{proposition}\label{prop:PIScondition}
Suppose that the random sequence $\{w_k\}_{k \in \Z}$ has a correlation bound 
$\Gamma_w \succ 0$ for matrix $A$. If $W \in \S^n$ and $\lambda \in [0,1)$ are 
such that $W \succ 0$ and 
\begin{equation}\label{eq:PIS_LMI}
A W A^\top + \Gamma_w \preceq W
\end{equation}
and 
\begin{equation}\label{eq:PIS_LMIb}
A W A^\top \preceq \lambda W
\end{equation}
then $\B(W,\rho)$ is a probabilistic invariant set with violation probability 
$n/(1-\lambda)\rho$. If, moreover, $w_k$ has normal 
distribution, then $\B(W,\rho)$ is a probabilistic invariant set with violation 
probability $1-\chi_n^2((1-\lambda)\rho)$.
\end{proposition}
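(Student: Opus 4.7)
The natural strategy is to mirror the proof of Property~\ref{prop:Inv}: decompose $x_k = \bar{x}_k + z_k$ as in (\ref{eq:xz}), with $\bar{x}_{k+1} = A\bar{x}_k$ starting from $x_0 \in \B(W,\rho)$ and $z_{k+1} = Az_k + w_k$ starting from $z_0 = 0$. The deterministic nominal trajectory is handled by (\ref{eq:PIS_LMIb}): this inequality is equivalent, by congruence with $W^{-1/2}$, to $A^\top W^{-1} A \preceq \lambda W^{-1}$, so iterating gives $\bar{x}_k^\top W^{-1} \bar{x}_k \leq \lambda^k x_0^\top W^{-1} x_0 \leq \lambda^k \rho$, i.e.\ $\bar{x}_k \in \B(W,\lambda^k\rho)$.

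For the stochastic part, a simple induction on $\Gamma_{k+1} = A\Gamma_k A^\top + \Gamma_w$ (with $\Gamma_0 = 0$) using (\ref{eq:PIS_LMI}) yields $\Gamma_k \preceq W$ for every $k$, and Property~\ref{prop:sevaral_results}(i) then gives $\E{z_k z_k^\top} \preceq \Gamma_k \preceq W$. The Markov argument of Property~\ref{prop:sevaral_results}(ii), applied to the quadratic form $z_k^\top \Gamma_k^{-1} z_k$ at the level $(1-\lambda)\rho$, yields $\Pr\{z_k^\top \Gamma_k^{-1} z_k \leq (1-\lambda)\rho\} \geq 1 - n/((1-\lambda)\rho)$; under the Gaussian assumption $z_k \sim \mathcal{N}(0,\Gamma_k)$ makes this quadratic form exactly $\chi_n^2$-distributed, producing the sharper probability $\chi_n^2((1-\lambda)\rho)$.

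The remaining and main technical obstacle is to show that this high-probability event, combined with $\bar{x}_k \in \B(W,\lambda^k\rho)$ and the two matrix inequalities, forces $x_k = \bar{x}_k + z_k$ into $\B(W,\rho)$: a naive triangle inequality on $\sqrt{x^\top W^{-1} x}$ only yields $(\sqrt{\lambda^k}+\sqrt{1-\lambda})^2\rho$, which can exceed $\rho$. My plan is therefore to bypass the set-theoretic route and work instead with a scalar Lyapunov recursion on $V(x) = x^\top W^{-1} x$. Expanding $\E{V(x_{k+1})}$, using (\ref{eq:PIS_LMIb}) on the purely deterministic $A^\top W^{-1} A$ term, and absorbing the mixed terms $A\E{z_k w_k^\top} + \E{w_k z_k^\top} A^\top + \E{w_k w_k^\top}$ via the correlation bound (\ref{eq:corr_bound}), one should obtain $\E{V(x_{k+1})} \leq \lambda\E{V(x_k)} + \tr(W^{-1}\Gamma_w)$. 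Since (\ref{eq:PIS_LMI}) forces $\Gamma_w \preceq W$ and hence $\tr(W^{-1}\Gamma_w) \leq n$, iterating this scalar recursion and then applying Markov's inequality to $V(x_k)$ yields the violation bound $n/((1-\lambda)\rho)$; the Gaussian case then follows by replacing Markov with the exact $\chi_n^2$ CDF.
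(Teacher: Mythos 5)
Your first two paragraphs reproduce the paper's own argument: the same decomposition (\ref{eq:xz}), the same use of (\ref{eq:PIS_LMIb}) to place $\bx_k$ in $\B(W,\lambda^k\rho)$, the same induction giving $\E{z_k z_k\T}\preceq W$ (the paper runs it directly on $\E{z_kz_k\T}$ rather than through $\Gamma_k$, which is cosmetic), and the same Markov/$\chi^2$ endgame applied to $z_k$. Where you diverge is the recombination step, and your worry there is legitimate, but your fix does not work. The paper recombines set-theoretically: since $\bx_k\in\B(W,\lambda^k\rho)$ with probability one, $\Pr\{x_k\in\B(W,\rho)\}\geq\Pr\{z_k\in\B(W,\rho)-\B(W,\lambda^k\rho)\}$, with $-$ the Pontryagin difference of the Notation section, and then identifies $\B(W,\rho)-\B(W,\lambda\rho)$ with $\B(W,(1-\lambda)\rho)$. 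For concentric ellipsoids with the same shape matrix that difference is in fact $\B(W,(1-\sqrt{\lambda})^2\rho)$, and $(1-\sqrt{\lambda})^2<1-\lambda$ for $\lambda\in(0,1)$, so the inclusion the paper needs fails for small $k$ --- this is exactly the obstruction you computed with the triangle inequality. You have therefore correctly located the delicate point; the conclusion this route actually supports is a violation level $n/((1-\sqrt{\lambda})^2\rho)$, respectively $1-\chi_n^2((1-\sqrt{\lambda})^2\rho)$.

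The Lyapunov route you propose instead, however, does not recover the claimed constant either. Iterating $\E{V(x_{k+1})}\leq\lambda\E{V(x_k)}+\tr(W^{-1}\Gamma_w)$ from $V(x_0)\leq\rho$ gives $\E{V(x_k)}\leq\lambda^k\rho+(1-\lambda^k)\,\tr(W^{-1}\Gamma_w)/(1-\lambda)$, and Markov at level $\rho$ then yields $\Pr\{x_k\notin\B(W,\rho)\}\leq\lambda^k+(1-\lambda^k)\,n/((1-\lambda)\rho)$, a convex combination of $1$ and $n/((1-\lambda)\rho)$ that strictly exceeds $n/((1-\lambda)\rho)$ whenever the latter is below $1$: the initial-condition term never disappears for finite $k$. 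Two further problems: expanding $\E{x_{k+1}x_{k+1}\T}$ produces the cross term $A\bx_k\E{w_k}\T+\E{w_k}\bx_k\T A\T$, which the correlation bound (\ref{eq:corr_bound}) --- stated for the zero-initial-condition process $z_k$, not for $x_k$ --- does not control unless you additionally assume $\E{w_k}=0$; and in the Gaussian case $V(x_k)$ is a non-central quadratic form whose shape matrix $W^{-1}$ is not the inverse covariance of $x_k$, so it is not $\chi_n^2$-distributed and Markov cannot simply be ``replaced by the exact $\chi_n^2$ CDF.'' The separation into a deterministic offset plus a zero-initial-condition stochastic part, with the probabilistic estimate applied to $z_k$ alone, is precisely what makes constants of the stated form attainable; your plan abandons that separation and pays for it in the final bound.
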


\begin{proof} 
As in the proof of Property~\ref{prop:Inv}, denote $x_k = \bx_k+z_k$, where 
$\{\bx_k\}_{k\geq 0}$ and $\{z_k\}_{k\geq 0}$ are given by the recursions 
(\ref{eq:xz}) for all $k \geq 0$, with $\bx_0 = x_0$ and $z_0 = 0$. Then, from 
$x_0 \in \B(W,\rho)$ and (\ref{eq:PIS_LMIb}), it follows that $\bx_k \in 
\B(W,\lambda^k \rho)$ for all $k \in \N$. First it is proved that $\E{z_k 
z_k\T} \preceq W$ implies $\E{z_{k+1} z_{k+1}\T} \preceq W$ for every $k \in 
\N$. Since $z_0 = 0$, the inequality $\E{z_0 z_0\T} = 0 \preceq W$ is trivially 
satisfied. Suppose now that $\E{z_k z_k\T} \preceq W$, then 
\begin{align}
 & \E{z_{k+1} z_{k+1}\T}  = \E{(A z_k + w_k) (A z_k + w_k)^\top } \nonumber \\
& = \E{A z_k z_k\T A\T + A z_k w_k\T \hspace{-0.15cm} + w_k z_k\T A\T 
\hspace{-0.15cm}+ w_k w_k\T } 
\nonumber \\
 & = A\E{z_k z_k\T} A\T \hspace{-0.15cm} + A \E{z_k w_k\T } + \E{w_k z_k\T } 
A\T + \E{w_k w_k\T} \nonumber \\
& \preceq A W A\T + \Gamma_w \preceq W \nonumber 
\end{align}
where the first inequality follows from the definition of correlation bound 
and the second from (\ref{eq:PIS_LMI}). Note now that
\begin{align}
& \Pr\{ x_k \in \B(W,\rho)\} = \Pr\{\bx_k + z_k \in \B(W,\rho)\} \nonumber \\
& \geq \Pr\{z_k \in \B(W,\rho) - \B(W,\lambda^k \rho)\} \nonumber 
\end{align}
since $\bx_k \in \B(W,\lambda^k \rho)$ with probability 1. It follows that 
$\B(W,(1-\lambda)\rho) = \B(W,\rho) - \B(W, \lambda \rho) \subseteq 
\B(W,\rho)  - \B(W, \lambda^k \rho)$ and, from and $\E{z_k z_k\T} \preceq W$ and 
the Chebyshev inequality (see, for example, proof of claim (ii) of 
Property~\ref{prop:sevaral_results}), then 
\begin{align}
& \Pr\{ x_k \in \B(W,\rho)\} \geq \Pr\{z_k \in \B(W,(1-\lambda)\rho)\} \nonumber 
\\
& = \Pr\{z_k W^{-1} z_k \leq (1-\lambda)\rho\} \geq 1 - 
\frac{n}{(1-\lambda)\rho}. \nonumber 
\end{align}
The results for $w_k$ with normal distribution follow directly from the 
definition of the $\chi$ squared cumulative distribution, that is $\Pr\{y\T y 
\leq r\} = \chi_n^2(r)$ for $y \in \R^n$ with standard normal 
distribution and $r > 0$, see 
\cite{billingsley2008probability,bertsekas2008introduction}.
\end{proof}

The proposition below proves that the convex condition (\ref{eq:PIS_LMI}) is 
less conservative than (\ref{eq:PIS_LMI_TEO}). 

\begin{proposition}\label{prop:LMIconditions}
Suppose that the random sequence $\{w_k\}_{k \in \N}$ has a correlation bound 
$\Gamma_w \succ 0$ for matrix $A$. If $W$ is such that condition 
(\ref{eq:PIS_LMI_TEO}) holds for $r \geq 1$, then also (\ref{eq:PIS_LMI}) is 
satisfied.
\end{proposition}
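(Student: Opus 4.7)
The plan is to chain three facts: Property~\ref{prop:inclusion}, which embeds a single ellipsoid in the Minkowski sum appearing in (\ref{eq:PIS_LMI_TEO}); the trivial monotonicity $\B(\Gamma_w,1)\subseteq \B(\Gamma_w,r)$ for $r\geq 1$; and the standard correspondence between inclusion of nondegenerate ellipsoids and positive semidefinite ordering. The conclusion (\ref{eq:PIS_LMI}) is then immediate.

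First I would apply Property~\ref{prop:inclusion} with $\Gma=W$, $\Sigma=\Gamma_w$, and parameter $r=1$ (this is allowed since $W\succeq 0$ and $\Gamma_w\succ 0$) to obtain
\begin{equation*}
\B(AWA^\top+\Gamma_w,\,1)\subseteq A\B(W,1)+\B(\Gamma_w,1).
\end{equation*}
Next, since $r\geq 1$, the ellipsoid $\B(\Gamma_w,1)=\{x:x^\top\Gamma_w^{-1}x\leq 1\}$ is contained in $\B(\Gamma_w,r)$, so
\begin{equation*}
A\B(W,1)+\B(\Gamma_w,1)\subseteq A\B(W,1)+\B(\Gamma_w,r).
\end{equation*}
Chaining these two inclusions with the hypothesis (\ref{eq:PIS_LMI_TEO}) yields $\B(AWA^\top+\Gamma_w,1)\subseteq\B(W,1)$.

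The final step is to translate this ellipsoidal inclusion into the matrix inequality. Since $\Gamma_w\succ 0$, the matrix $M:=AWA^\top+\Gamma_w$ is also positive definite, so $\B(M,1)=\{x:x^\top M^{-1}x\leq 1\}$ and likewise for $\B(W,1)$. Writing $\B(M,1)=\{M^{1/2}z:z^\top z\leq 1\}$, the inclusion $\B(M,1)\subseteq \B(W,1)$ means $z^\top M^{1/2}W^{-1}M^{1/2}z\leq 1$ for every unit vector $z$, i.e.\ $M^{1/2}W^{-1}M^{1/2}\preceq I$, which is equivalent to $M\preceq W$. This is precisely (\ref{eq:PIS_LMI}).

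The only mildly delicate point is ensuring the ellipsoid-to-LMI conversion is valid, which requires positive definiteness of both matrices involved; the assumption $\Gamma_w\succ 0$ takes care of this. Everything else is a direct application of results already established in the paper.
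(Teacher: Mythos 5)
Your proof is correct and follows essentially the same route as the paper's: apply Property~\ref{prop:inclusion} to get $\B(AWA^\top+\Gamma_w,1)\subseteq A\B(W,1)+\B(\Gamma_w,1)$, enlarge to $\B(\Gamma_w,r)$ using $r\geq 1$, chain with (\ref{eq:PIS_LMI_TEO}), and convert the resulting ellipsoidal inclusion into the matrix inequality. The only difference is that you spell out the final ellipsoid-to-LMI equivalence, which the paper simply asserts; that added detail is correct and harmless.
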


\begin{proof}
First note that condition (\ref{eq:PIS_LMI}) is equivalent to $\B(A W A \T 
\hspace{-0.15cm} + \Gamma_w,1) \subseteq \B(W,1)$. From claim (iii) of 
Property \ref{prop:LMIbound} it follows 
\begin{equation*}
\B(A W A \T \hspace{-0.15cm} + \Gamma_w, \hspace{-0.05cm} 1) 
\hspace{-0.05cm} \subseteq \hspace{-0.05cm} A \B(W\hspace{-0.05cm}, 1) 
\hspace{-0.05cm} + \hspace{-0.05cm} \B(\Gamma_w,1) \hspace{-0.05cm} 
\subseteq \hspace{-0.05cm} A \B(W, 1) \hspace{-0.05cm} + 
\hspace{-0.05cm} \B(\Gamma_w,r)
\end{equation*}
since $r \geq 1$, and thus that (\ref{eq:PIS_LMI_TEO}) implies 
(\ref{eq:PIS_LMI}). 
\end{proof}

Therefore, condition (\ref{eq:PIS_LMI}) can be used to efficiently determine 
probabilistic invariant ellipsoids and is less conservative than 
(\ref{eq:PIS_LMI_TEO}). 

\section{Numerical examples}

Two examples are presented, concerning different bounds on the correlation 
matrices of the random sequence~$w_k$ that affects the system 
(\ref{eq:system}) with 
$A = \left[\begin{smallmatrix}
    0.25 & 0\\  0.1 &  0.3
     \end{smallmatrix}\right].
$

\subsection{Exponentially decaying bound}

Consider the case in which the bound (\ref{eq:bound}) holds with $\alpha = 0$, 
$\beta = 1$ and $\gamma < 1$. This means that the correlation between the 
samples of $w_k$ and $w_{k-l}$ is assumed to be exponentially decreasing with 
$l$ and it represents the systems for which the dependence between samples 
fades with time. 

To validate the presented results, it is necessary to generate a random sequence 
satisfying the bound (\ref{eq:bound}). This can be obtained by feeding an 
asymptotically stable discrete-time system with an i.i.d. random process with 
zero mean and a given constant covariance matrix. Consider the i.i.d. process 
$u_k$ with $k \in \N$ such that  
\begin{equation*}
 \E{u_k} = 0\ \qquad \E{u_k u_k\T} = U,
\end{equation*}
for all $k \in \N$ and 
\begin{equation}\label{eq:sysHF}
 w_{k+1} = H w_k + F u_k.
\end{equation}
Then $w_{k} = \sum_{j = -\infty}^{k-1} H^{j-k+1} F u_{j}$ and it can be proved 
that 
\begin{equation*}
 \E{w_k} = 0, \quad \E{w_k w_k\T} = \Gma, \quad \E{w_{k+l} w_k\T} = H^l 
\Gma
\end{equation*}
for all $k, l \in \N$, where $\Gma \in \S^n$ is the unique solution of $\Gma 
= H \Gma H\T + F U F\T$. Then, the bound (\ref{eq:bound}) holds with $\alpha = 
0$, $\beta = 1$ and $\gamma$ solution of the following optimization problem:
\begin{align}
& \min_{\gamma} \{ \gamma \ : \  H \Gma H\T \preceq \gamma 
\Gma \}.
\end{align}
Therefore, given $\Gma$ and $\gamma$, a random sequence satisfying 
(\ref{eq:bound}) 
for this values can be obtained by appropriately designing $H$ and $F$ or, 
viceversa, given $U$, $H$ and $F$, the matrix $\Gma$ and $\gamma$ can be 
obtained.

An i.i.d. random sequence with distribution $\mathcal{N}(0,U)$, with $U = 
\rm{diag}(1.5, \, 0.26)$, has been used to feed system (\ref{eq:sysHF}) 
with 
\begin{equation*}
H = \left[\begin{array}{cc}
     0.75 & -0.2\\ 0 & 0.6 
     \end{array}\right], \qquad 
F  = \left[\begin{array}{cc}
     1 & 2\\ 0.5 & -3 
     \end{array}\right]
\end{equation*}
giving a correlated sequence $w_k$ with null mean and covariance matrix 
\begin{equation*}
\Gma = \left[\begin{array}{cc}
   7.8381 &  -2.3983\\
   -2.3983 &   4.2422
\end{array}\right].
\end{equation*}
The bound (\ref{eq:bound}) holds with $\alpha = 0$, $\beta = 1$ and $\gamma = 
0.676$. The correlation bound $\Gamma_w$, computed using (\ref{eq:LMIcond_S}), 
and matrix $W$ from (\ref{eq:PIS_LMI}) are
\begin{equation*}
\Gamma_w = \left[\hspace{-0.1cm}
\begin{array}{cc}
   19.5198 &  -5.9726\\
   -5.9726 &  10.5646
\end{array}\hspace{-0.1cm}\right]\hspace{-0.1cm}, \
W = \left[\hspace{-0.1cm}\begin{array}{cc}
   20.8211  & -5.8942\\
   -5.8942  & 11.4496
\end{array}\hspace{-0.1cm}\right]
\end{equation*}
with $\lambda = 0.1221$ from (\ref{eq:PIS_LMIb}). Using the fact that $w_k$ 
has normal distribution, the set $\B(W,\rho)$ is a probabilistic invariant set 
with violation probability of $p_v$ with $\chi_2^2((1-\lambda)\rho) = 1 - p_v$. 
Different values of violation probability $p_v$ have 
been tested, in particular $p_v = 0.1, 0.2, 0.3, 0.4, 0.5$. For every $p_v$, $N 
= 1000$ initial states $x_0$ have been uniformly generated on the boundary of 
$\B(W,\rho)$ and assumed independent on $w_k$. For each $x_0$, a sequence $w_k$ 
has been generated and applied. For every $k = 1, \ldots, 100$, the set of 
states $x_k$ are computed and the number of violation $v_k$ of the constraint 
$x_k \in \B(W,\rho)$ have been computed. The frequencies of violation $v_k/N$, 
for every $p_v$ and $k \in 1,\ldots,100$, are depicted in Fig.~\ref{fig:1}, 
that shows that the bound is always satisfied. 

\begin{figure}[!ht]
\hspace*{-0.5cm}
\input{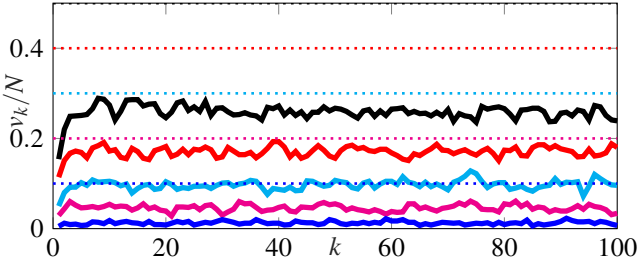}
\vspace*{-0.4cm}
     \caption{Frequency of violations $v_k/N$ of $x_k \in \B(W,\rho)$ for $k = 
1, \ldots,100$, with $\alpha = 0$ and $\beta = 1$, obtained for violation 
probability of: $50\%$ in black; $40\%$ in red; $30\%$ in cyan; $20\%$ in 
magenta; $10\%$ in blue.}\label{fig:1}
\end{figure}

\subsection{Constant bound}

A case for which the bound (\ref{eq:bound}) holds with $\alpha = 1$ and $\beta 
= 0$ is considered here. Supposing the that $w_k = w$ for all $k \in \N$, such 
that $\E{w w\T} \preceq \Gma$, would lead (\ref{eq:bound}) to hold with 
$\alpha = 1$ and $\beta = 0$.

The constant value of the disturbance $w$ has been generated according to 
$\mathcal N(0,\Gma)$ with $\Gma$ randomly generated:
\begin{equation*}
\Gma = \left[\begin{array}{cc}
     0.4785  & -0.7254\\
   -0.7254   & 1.5215
\end{array}\right].
\end{equation*}
The correlation bound as in (\ref{eq:LMIcond_S}) and related probabilistic 
invariant are given by the matrices 
\begin{equation*}
\Gamma_w = \left[\hspace{-0cm}
\begin{array}{cc}
    1.1877 &  -1.8007\\
   -1.8007 &   3.7767
\end{array}\hspace{-0cm}\right]\hspace{-0.1cm}, 
\ W  = \left[\hspace{-0cm}
\begin{array}{cc}
    1.2669 &  -1.9125\\
   -1.9125 &   4.0380
\end{array}\hspace{-0cm}\right].
\end{equation*}
with $\lambda =  0.0921$ from (\ref{eq:PIS_LMIb}). As for the case of decaying 
bound, the violation probabilities $p_v = 0.1, 0.2, 0.3, 0.4, 0.5$ are 
considered and $1000$ initial states $x_0$ are uniformly distributed on the 
boundary of $\B(W,\rho)$ to check the violation frequencies. For every 
$x_0$, a constant sequence $w_k = w$, with $k \in \N$, is generated with 
distribution $\mathcal N(0,\Gma)$ for $w$ and the number of the set inclusion 
$x_k \in \B(W,\rho)$ violations $v_k$ are obtained for all $k = 1, \ldots,100$. 
See the results in Fig~\ref{fig:2}, the probability violation bound is 
satisfied.

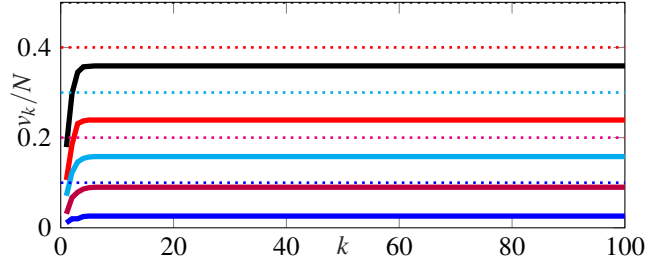
\begin{figure}[!ht]
\hspace*{-0.5cm}
%
%
%
\begin{tikzpicture}

\begin{axis}[%
width=7.5cm,
height=3cm,
at={(0,0)},
scale only axis,
xmin=0,
xmax=100,
ymin=0,
ymax=0.5,
ylabel style={xshift=.2cm, yshift=-0.5cm, font=\color{white!15!black}},
ylabel={$v_k/N$},
xlabel style={yshift=0.5cm, font=\color{white!15!black}},
xlabel={$k$},
axis background/.style={fill=white}
]
\addplot [color=black, line width=2.0pt, forget plot]
  table[row sep=crcr]{%
1	0.179000000000002\\
2	0.298000000000002\\
3	0.344999999999999\\
4	0.356999999999999\\
6	0.358999999999995\\
100	0.358999999999995\\
};
\addplot [color=red, line width=2.0pt, forget plot]
  table[row sep=crcr]{%
1	0.105999999999995\\
2	0.185000000000002\\
3	0.230999999999995\\
4	0.236999999999995\\
5	0.239000000000004\\
100	0.239000000000004\\
};
\addplot [color=cyan, line width=2.0pt, forget plot]
  table[row sep=crcr]{%
1	0.070999999999998\\
2	0.123000000000005\\
3	0.146000000000001\\
4	0.153999999999996\\
5	0.156999999999996\\
6	0.158000000000001\\
100	0.158000000000001\\
};
\addplot [color=purple, line width=2.0pt, forget plot]
  table[row sep=crcr]{%
1	0.0310000000000059\\
2	0.0669999999999931\\
3	0.0789999999999935\\
4	0.0859999999999985\\
5	0.0889999999999986\\
6	0.0900000000000034\\
100	0.0900000000000034\\
};
\addplot [color=blue, line width=2.0pt, forget plot]
  table[row sep=crcr]{%
1	0.0109999999999957\\
2	0.019999999999996\\
3	0.019999999999996\\
4	0.0250000000000057\\
5	0.0259999999999962\\
100	0.0259999999999962\\
};
\addplot [color=black, dotted, line width=1.0pt, forget plot]
  table[row sep=crcr]{%
0	0.5\\
100	0.5\\
};
\addplot [color=red, dotted, line width=1.0pt, forget plot]
  table[row sep=crcr]{%
0	0.400000000000006\\
100	0.400000000000006\\
};
\addplot [color=cyan, dotted, line width=1.0pt, forget 
plot]
  table[row sep=crcr]{%
0	0.299999999999997\\
100	0.299999999999997\\
};
\addplot [color=magenta, dotted, line width=1.0pt, forget plot]
  table[row sep=crcr]{%
0	0.200000000000003\\
100	0.200000000000003\\
};
\addplot [color=blue, dotted, line width=1.0pt, forget plot]
  table[row sep=crcr]{%
0	0.0999999999999943\\
100	0.0999999999999943\\
};
\end{axis}
\end{tikzpicture}%
\vspace*{-0.4cm}
     \caption{Frequency of violations $v_k/N$ of $x_k \in \B(W,\rho)$ for $k = 
1, \ldots,100$, with $\alpha = 1$ and $\beta = 0$, obtained for violation 
probability of: $50\%$ in black; $40\%$ 
in red; $30\%$ in cyan; $20\%$ in magenta; $10\%$ in blue.}\label{fig:2}
\end{figure}

\section{Conclusions}

This paper presented methods, based on convex optimization, to compute 
probabilistic reachable and invariant sets for linear systems fed by a 
stochastic disturbance correlated in time. From the knowledge of some bound on 
the correlation matrices, the characterization of the called correlation 
bound is given and then employed for obtaining the reachable and invariant 
sets.

\bibliographystyle{plain}
\bibliography{autosam_arXiv}

\end{document}